\definecolor{MyDarkBlue}{rgb}{0,0.08,0.45}
\definecolor{yellow}{rgb}{0.99,0.99,0.70}
\definecolor{white}{rgb}{0.796,0.948,0.816}
\definecolor{black}{rgb}{0.00,0.00,0.00}
\def\diag{\textrm{diag}}
\newtheorem{Proposition}{Proposition}
\newtheorem{Lemma}{Lemma}
\newtheorem{Remark}{Remark}
\newtheorem{Corollary}{Corollary}
\newtheorem{theorem}{$\mathbf{Theorem}$}
\newtheorem{proof}{Proof}
\begin{document}
\title{\huge Reconfigurable Intelligent Surfaces Assisted Communications with Discrete Phase Shifts: How Many Quantization Levels\\ are Required to Achieve Full Diversity?}

\author{Peng Xu,  \IEEEmembership{Member, IEEE},
Gaojie Chen,  \IEEEmembership{Senior Member, IEEE}, Zheng Yang,  \IEEEmembership{Member, IEEE}, \\
and Marco Di Renzo, \IEEEmembership{Fellow, IEEE}

\thanks{
P. Xu is  with  Chongqing Key Laboratory of Mobile Communications Technology,  School of   Communication and Information Engineering,  Chongqing University of Posts and Telecommunications, Chongqing, 400065,  China.

G. Chen is with School of Engineering, University of Leicester, Leicester LE1 7RH, U.K. (gaojie.chen@leicester.ac.uk).

Z. Yang is with Fujian Provincial Engineering Technology Research Center of Photoelectric Sensing Application, Key Laboratory of OptoElectronic Science and Technology for
Medicine of Ministry of Education, Fujian Normal University, Fuzhou 350007, China (e-mail: zyfjnu@163.com).

M. Di Renzo is with Universit\'e Paris-Saclay, CNRS and CentraleSup\'elec, Laboratoire des Signaux et Syst\`emes,  Gif-sur-Yvette, France. (e-mail: marco.direnzo@centralesupelec.fr).
}\vspace{-2em}}
\maketitle
\begin{abstract}
Due to hardware limitations, the phase shifts of the reflecting elements of
reconfigurable intelligent surfaces (RISs)
 need to be quantized into discrete values.
This letter aims to unveil  the minimum required number of phase quantization levels $L$
 in order to achieve the full diversity order
in RIS-assisted wireless
communication systems.
 With the aid of an upper bound of the outage
probability, we first prove that the full diversity order is achievable provided that  $L$ is not less than three.
 If $L=2$, on the other hand, we prove that the achievable diversity order cannot exceed $(N+1)/2$, where $N$ is the number of reflecting  elements. This is obtained with the aid of a lower bound of the outage probability. Therefore, we prove that the minimum required value of $L$ to achieve the full diversity order is $L=3$. Simulation results verify the theoretical analysis and the impact of phase quantization levels on RIS-assisted communication systems.
\end{abstract}
\vspace{-0.5em}
\begin{IEEEkeywords}
Reconfigurable intelligent  surface, discrete phase shifts, phase errors, outage probability, diversity order.
\end{IEEEkeywords}
\vspace{-1em}
\section{Introduction}
Reconfigurable intelligent surfaces (RISs) have recently  received significant attention  due to their ability to intelligently reconfigure the wireless  environment by integrating passive reflectors on  flat surfaces \cite{IRS_Marco_Eurasip,IRS_Marco_Smart,huang2020holog}. Precisely, each element of the RIS can control the amplitude and phase of the reflected signal, such that multiple reflected signals can be co-phased through  passive beamforming. Compared with  other technologies, such as relaying, RISs do not need to use radio-frequency chains and enjoys
 a convenient and low-cost deployment \cite{IRS_Survey,IRS_Marco_Relay}.

Due to their potential benefits in the context of wireless communications, RISs have recently been employed for several applications, which include beamforming \cite{wu2019intell,wu2020beamforming},
 non-orthogonal multiple access \cite{IRS_NOMA_Ding,IRS_NOMA_Zheng}, physical layer security \cite{IRS_Security_Chu}, relaying \cite{IRS_Relay_Abdul,IRS_Marco_Relay}, etc. These existing works have demonstrated that a suitable design of the phase shifts
of the reflecting elements is necessary  to reap the advantages of RIS-aided transmission. However, most of these existing works assume that the phase shifts of the reflectors are continuous variables, which is difficult to implement in practice due to hardware limitations. Motivated by these considerations,
 recent  works have investigated the performance of RIS-assisted systems in the presence of
  quantization phase errors (e.g., \cite{IRS_NOMA_Zheng,wu2020beamforming,IRS_Phase_Error,zhang2020reconf,IRS_Error_Capacity,IRS_Marco}). The works in  \cite{wu2020beamforming} and \cite{IRS_NOMA_Zheng} solved optimization problems with respect to the discrete phase shifts. The works in \cite{zhang2020reconf,IRS_Phase_Error,IRS_Error_Capacity,IRS_Marco} analyzed the average signal-to-noise-ratio (SNR) and the achievable data rate.
{The works in \cite{zhang2020reconf,IRS_Phase_Error,IRS_Error_Capacity,IRS_Marco}
   demonstrated that the average SNR scales with $N^2$ for  large values of $N$,  where
   $N$ denotes the number of reflecting elements.
 However,
the exact relation between the diversity order and the number of phase quantization
 levels $L$ for arbitrary $N\geq 2$ was not  revealed. Motivated by these considerations,  we
  investigate the diversity order of RIS-assisted transmission in the presence
  of  quantization phase errors for any $N \ge 2$.}

More specifically, the  main contribution of this letter consists of unveiling the  minimum number of
 required phase quantization levels to achieve the full diversity order  $N$. To obtain this goal,  the full diversity order of  $N$ is first proved to be achievable if  $L\geq 3$, by using mathematical induction based on
  an upper bound of the outage probability. Then, the achievable diversity order is proved not to exceed  $\frac{N+1}2$ if $L=2$, with the aid of a lower bound of the outage probability conditioned on the event that each phase quantization error is close to the quantization boundary. Numerical results also demonstrate that the loss of the outage performance is negligible if $L\geq 3$, which provides  important  insights for the design of phase quantization in RIS-assisted systems.

\vspace{-1em}
\section{System Model and Preliminaries}\label{section_model}
We consider an RIS-aided transmission system, which consists of a single-antenna source ($S$), a single-antenna destination ($D$) and an RIS with $N$ reflecting elements.
The direct link between $S$ and $D$ is assumed to be weak and it is not considered\footnote{We  consider the direct link between $S$ and $D$ in Section \ref{section_discussion}.}.
The channel coefficients from $S$ to the RIS and from the RIS to $D$ are denoted by the column-vectors $\mathbf{h}_{SI}$ and $\mathbf{h}_{ID}$, respectively.
The $n$th elements of $\mathbf{h}_{SI}$ and $\mathbf{h}_{ID}$ are
$[\mathbf{h}_{SI}]_n\sim\mathcal{CN}(0,\Omega_S)$ and $[\mathbf{h}_{ID}]_n\sim\mathcal{CN}(0,\Omega_I)$, respectively,
$\forall n\in\{1,\ldots,N\}$, {where $\mathcal{CN}$ represents  the complex Gaussian distributions} and $\Omega_S$ and $\Omega_I$ are the variances  of [$\mathbf{h}_{SI}]_n$ and $[\mathbf{h}_{ID}]_n$, respectively.
 All channel coefficients are assumed to be mutually  independent.
For each transmission, the received signal at  $D$ can be written  as follows
\begin{equation} \small
  y_D=\sqrt{P}\eta\mathbf{h}_{SI}^T\boldsymbol{\Phi}\mathbf{h}_{ID}x_S+w_D,
\end{equation}
where $x_S$ is the transmitted signal from $S$, $\mathbb{E}(|x_S|^2)=1$, $P$ is the transmit power,
 $w_D\sim\mathcal{CN}(0,\delta^2)$ is the  additive white Gaussian
noise at $D$,  $\boldsymbol{\Phi}=\diag(
e^{j\phi_1},e^{j\phi_2},\ldots,e^{j\phi_N})$, $\eta\in(0,1]$ is the amplitude reflection coefficient\footnote{
For simplicity, this letter assumes that all elements have the same amplitude reflection coefficient.} and
$\phi_n\in[0,2\pi]$ is the phase shift introduced by  the $n$th reflecting element.


The optimal value of $\phi_n$ is $\phi_n^*=-\arg([\mathbf{h}_{SI}]_n[\mathbf{h}_{ID}]_n)$, $n\in\{1,\ldots,N\}$.
Without considering the impact of phase errors, the full diversity order of $N$ can be achieved with this optimal setting \cite{IRS_NOMA_Ding}. However, only a finite number of quantized
 values can be considered in practice. Therefore, we consider a finite  number of
  quantization levels  $L\geq 2$, where $L$ is a positive integer.
   Accordingly, the number of quantization bits is $\log_2L$.
    The  phase shift  range  is uniformly quantized into $L$ levels, i.e.,
     $\mathcal{F}\triangleq\left\{0,\frac{2\pi}{L},\ldots,\frac{(L-1)2\pi}{L}\right\}$,
     and the phase shift of each reflecting element is designed by mapping
       its optimal value to the nearest point in $\mathcal{F}$, i.e.,
        $\phi_n=\hat{\phi}_{l_l}$, where $|\phi_n^*-\hat{\phi}_{l_l}|\leq |\phi_n^*-\hat{\phi}_{l_u}|$, 
$\hat{\phi}_{l_l},\hat{\phi}_{l_u}\in\mathcal{F}$. Then, the phase
 error $\Theta_n=\hat{\phi}_{l_l}-\phi^*_n$ for the $n$th reflecting element is a uniformly distributed random variable in $[-\frac{\pi}{L},\frac{\pi}{L}]$ \cite{IRS_Phase_Error}. Thus, the received SNR at $D$ can be written as follows
\begin{equation}\label{gamma_D}\small
  \gamma_D={\rho\eta^2\left|\sum_{n=1}^N\left|[\mathbf{h}_{SI}]_n[\mathbf{h}_{ID}]_n\right|e^{j{\Theta}_n}\right|^2}
  =\rho\eta^2\Omega_S\Omega_I{\left|\sum_{n=1}^Ng_n\right|^2},
\end{equation}
where $\rho\triangleq\frac{P}{\delta^2}$ denotes the transmit SNR and
$g_n\triangleq \frac{|[\mathbf{h}_{SI}]_n[\mathbf{h}_{ID}]_n|e^{j \Theta_n}}{\sqrt{\Omega_S\Omega_I}}$ is the normalized
channel coefficient.
{From \cite{IRS_NOMA_Ding},  the cumulative distribution function (CDF) of $|g_n|^2$ can be expressed as
   \begin{align}\label{cdf}
    \!\!\!F_{|g_n|^2}(x)\!=\!1-\!2\sqrt{x}K_1(2\sqrt{x})\thickapprox \!-x\ln x,\ \textrm{as}\ x\thickapprox 0,
  \end{align}
   where $K_1(\cdot)$ denotes the modified Bessel function of the
second kind.}

 The outage probability between $S$ and $D$ can be expressed as
follows
\begin{align}\label{definition_outage}
  P_N^{\rm out}(\rho)&=\Pr\left\{\log_2\left(1+\rho\eta^2\Omega_S\Omega_I{|G_N|^2}\right)<R_0\right\}\notag\\
  &=\Pr\left\{{|G_N|^2}<\epsilon_0\rho^{-1}\right\},
\end{align}
where $G_N\triangleq\sum_{n=1}^N g_n$ and $R_0$ denotes the target data rate in bits per channel use (bpcu)  and $\epsilon_0\triangleq
\frac{2^{R_0}-1}{\eta^2\Omega_S\Omega_I}$.
The diversity order is defined, as a function of $L$, as follows:
\begin{align}\label{definition_diversity}
  d_N(L)=\lim_{\rho\rightarrow \infty}\frac{-\log P_N^{\rm out}(\rho)}{\log \rho}.
\end{align}
\begin{Remark}
  {Although the central limit
   theorem (CLT) is widely adopted to analyze the averaged SNR of RIS-assisted systems for large values of $N$,
   it is not suitable to analyze the diversity order. In particular,  the CLT-based approximation
   has an error
floor, i.e.,  the outage probability becomes a constant when $\rho\rightarrow
\infty$ as shown in \cite{IRS_NOMA_Ding}.}
  \end{Remark}
  \begin{Remark}
{The CDF of $G_N$ is difficult to obtain, since its real part and imaginary part
are correlated. Therefore, an exact analytical expression for the outage probability $P_N^{\rm out}$ cannot be obtained, in general. Motivated by these considerations,
in the following sections, we derive  lower and upper bounds for $P_N^{\rm out}$ in order to study the achievable diversity order for different values of $L$.
}
\end{Remark}
\vspace{-1em}
\section{Diversity Order for $L=3$}\label{section_L3}
In this section, we introduce an upper bound for the outage probability in \eqref{definition_outage}
 and prove that the maximum diversity order $N$ can be achieved in this case.
When $L=3$, we have
$\Theta_n\in[-\frac{\pi}{3},\frac{\pi}{3}]$, $\forall n\in\{1,\ldots,N\}$.
 Before proceeding with the analytical derivation, we introduce two lemmas.

{
\begin{Lemma}\label{lemma_arg}
  For $\forall C_1,C_2\in \mathbb{C}$ ($\mathbb{C}$ denotes
 the  complex domain), $\arg(C_1+C_2)\in[\varphi_1,\varphi_2]$ holds
       if $\arg(C_1),\arg(C_2)\in[\varphi_1,\varphi_2]$ and $-\frac{\pi}2\leq\varphi_1\leq\varphi_2\leq \frac{\pi}2$.
\end{Lemma}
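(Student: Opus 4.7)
My plan is to exploit the geometric fact that, whenever $[\varphi_1,\varphi_2]\subseteq[-\pi/2,\pi/2]$, the closed sector $\mathcal{S}\triangleq\{C\in\mathbb{C}\setminus\{0\}:\arg(C)\in[\varphi_1,\varphi_2]\}\cup\{0\}$ is a convex cone in the plane. Indeed, its angular aperture satisfies $\varphi_2-\varphi_1\leq\pi$, so $\mathcal{S}$ is contained in a closed half-plane and can be written as the intersection of the two half-planes obtained by rotating $C$ through $-\varphi_1$ and $-\varphi_2$ and examining the sign of the resulting imaginary part. Convex cones are closed under addition, so $C_1,C_2\in\mathcal{S}$ immediately implies $C_1+C_2\in\mathcal{S}$, which is precisely the statement of the lemma.

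To put this into concrete coordinates, I would write $C_k=x_k+jy_k$ for $k=1,2$. The hypothesis $\arg(C_k)\in[\varphi_1,\varphi_2]\subseteq[-\pi/2,\pi/2]$ forces $x_k\geq 0$ and, in the generic case $x_k>0$, is equivalent by the strict monotonicity of $\tan$ on $(-\pi/2,\pi/2)$ to the linear bounds $\tan(\varphi_1)\,x_k\leq y_k\leq\tan(\varphi_2)\,x_k$. Summing over $k$ and dividing by $x_1+x_2>0$ yields the same bounds for $(y_1+y_2)/(x_1+x_2)$, and applying $\arctan$ gives $\arg(C_1+C_2)\in[\varphi_1,\varphi_2]$. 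A couple of degenerate configurations (both $x_k=0$, or boundary values $\varphi_i=\pm\pi/2$ where $\tan$ blows up) have to be inspected separately, but each reduces to a trivial sign check on the imaginary parts.

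There is no real obstacle here, since the claim is essentially a one-line convexity observation; the delicate point worth emphasising in the write-up is that the hypothesis $[\varphi_1,\varphi_2]\subseteq[-\pi/2,\pi/2]$ is indispensable. If the interval had width larger than $\pi$, the sector would fail to be convex, as can be seen by summing two unit vectors with arguments $2\pi/3$ and $-2\pi/3$: the resultant lies on the negative real axis and its argument equals $\pm\pi$, which is outside $[-2\pi/3,2\pi/3]$. This is precisely why the lemma will subsequently be applied only in the regime $L\geq 3$, so that every phase error lies in $[-\pi/L,\pi/L]\subseteq[-\pi/3,\pi/3]$, well within the half-plane and thus within the convex-cone condition.
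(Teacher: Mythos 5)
Your argument is correct and is exactly the geometric fact the paper invokes: the authors simply assert that the lemma "can be easily obtained based on geometrical relationships ... in a complex plane" and omit all details, so your convex-cone/coordinate computation is a faithful formalization of the same approach rather than a different route. The only point worth a footnote in a full write-up is the degenerate configuration $C_1+C_2=0$ (possible only when $\varphi_1=-\frac{\pi}{2}$, $\varphi_2=\frac{\pi}{2}$ and $C_2=-C_1$ is purely imaginary), where $\arg(C_1+C_2)$ is undefined; this falls under the boundary cases you already flag and is irrelevant to the paper's use of the lemma, where the summands are continuous random variables.
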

\begin{proof}
This lemma can be easily
obtained based on geometrical relationships among $\tilde{g}_i$, $\tilde{g}_j$ and $\tilde{g}_i+\tilde{g}_j$
in a complex  plane, whose details are omitted for  brevity. 
\end{proof}
}

\begin{Lemma}\label{lemma_basic}
For  $\forall b,c
 \in\mathbb{R}^+$ ($\mathbb{R}^+$ denotes the positive  real domain), 
 and $\forall n\in\{1,\ldots,N\}$, we
 have\footnote{Throughout this letter, each parameter is acquiescently
  not a function of $\rho$, i.e., its value does not change with $\rho$, unless otherwise stated.}
\begin{enumerate}
  \item[(a)]
  $\Pr\left\{(|g_n|-a)^2\!<\!b\rho^{-1}\right\}\dot\leq \rho^{-1}$,
   if 
  $0\!<\!a<\! c\rho^{- \frac12}$; 
  \item[(b)]
   $\Pr\{(|g_n|-a)^2<b\rho^{-1}\}\doteq \rho^{-\frac12}$, if 
   {$a>0$ being independent of $\rho$, i.e., $a\doteq \rho^0$.}
\end{enumerate}
where ``$\doteq$'' denotes exponential equality \cite{zheng2003diversity}, i.e., $f(\rho)\doteq \rho^{d}$ is defined as $\lim_{\rho\rightarrow\infty}\frac{\log f(\rho)}{\log \rho}=d$, and $d\in\mathbb{R}$
is the exponential order of $f(\rho)$.
 {Similarly, $f(\rho)\dot\leq\rho^{d}$ $(\dot\geq\rho^{d})$ is defined as
  $\lim_{\rho\rightarrow\infty}\frac{\log f(\rho)}{\log \rho}\leq d$ ($\geq d$).}
\end{Lemma}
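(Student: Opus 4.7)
The plan is to translate the event $\{(|g_n|-a)^2 < b\rho^{-1}\}$ into the two-sided interval $\{a-\sqrt{b/\rho}<|g_n|<a+\sqrt{b/\rho}\}$ and then estimate the resulting CDF increment using \eqref{cdf}. Since $F_{|g_n|}(y)=F_{|g_n|^2}(y^2)\approx -y^2\ln(y^2)$ as $y\to 0^+$, and the density
\[
f_{|g_n|}(y) \;=\; -\frac{d}{dy}\bigl[2yK_1(2y)\bigr] \;=\; 4y\,K_0(2y)
\]
is continuous and strictly positive on $(0,\infty)$, both parts reduce to a one-sided estimate on a small interval, whose location depends on how $a$ scales with $\rho$.

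For part (a), the hypothesis $0<a<c\rho^{-1/2}$ forces the whole interval to lie below $(c+\sqrt{b})\rho^{-1/2}$. Dropping the lower endpoint, I would upper-bound
\[
\Pr\{(|g_n|-a)^2<b\rho^{-1}\} \;\leq\; F_{|g_n|^2}\!\bigl((c+\sqrt{b})^2\rho^{-1}\bigr),
\]
and then substitute the small-argument expansion from \eqref{cdf}, which yields a bound of order $\rho^{-1}\ln\rho$. Because $\ln\rho$ contributes only logarithmically, the exponential order is at most $-1$, giving $\dot\leq\rho^{-1}$. For part (b), $a$ is a positive constant independent of $\rho$, so for $\rho$ large enough the interval $(a-\sqrt{b/\rho},a+\sqrt{b/\rho})$ lies in a compact subset of $(0,\infty)$ on which $f_{|g_n|}$ is bounded and bounded below. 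The mean value theorem then gives
\[
\Pr\{(|g_n|-a)^2<b\rho^{-1}\} \;=\; 2\sqrt{b/\rho}\,f_{|g_n|}(\xi_\rho),
\]
with $\xi_\rho\to a$ and $f_{|g_n|}(\xi_\rho)\to 4aK_0(2a)>0$, yielding the exact exponential equality $\doteq \rho^{-1/2}$.

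The main obstacle is conceptually minor but notationally delicate: for part (a), I need to verify that the $\ln\rho$ factor really is absorbed under $\dot\leq$ (trivial from the definition, but worth stating explicitly); for part (b), I need both an upper and a matching lower exponential bound, which requires the strict positivity of $K_0$ on $(0,\infty)$ so that the two bounds do not cancel into something of smaller order. Once these asymptotic and positivity facts are in place, the exponential-order bookkeeping is routine.
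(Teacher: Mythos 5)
Your proposal is correct and follows essentially the same route as the paper: part (a) by dropping the lower endpoint and invoking the small-argument expansion of $F_{|g_n|^2}$ (with the $\ln\rho$ factor absorbed by the $\dot\leq$ definition), and part (b) by a first-order expansion of the CDF increment over an interval of width $\Theta(\rho^{-1/2})$ around the fixed point $a$. The only cosmetic difference is that you phrase part (b) via the mean value theorem applied to the density $f_{|g_n|}(y)=4yK_0(2y)$, whereas the paper Taylor-expands $\tilde K_1(x)=xK_1(x)$ from its series representation; both reduce to the same nonvanishing derivative $\tilde K_1'(2a)=-2aK_0(2a)$ and give the identical leading constant.
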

\begin{proof}
See  Appendix \ref{proof_lemma_basic}.
\end{proof}

In the following two subsections, we  derive the outage probability for $N=2$ and $N \ge 2$, respectively.
 \vspace{-1em}

\subsection{$N=2$}\label{subsection(L3N2)}
When $N=2$, $G_2=g_1+g_2$. Since $|\Theta_1-\Theta_2|\leq \frac{2\pi}3$ and $G_2$ decreases with $|\Theta_1-\Theta_2|$, $|G_2|^2$ can be lower bounded
\begin{align}\label{G_2(L=3)lower}
  |G_2|^2&=|g_2|^2+|g_1|^2+2|g_2||g_1|\cos{(|\Theta_2-\Theta_1|)}\notag\\
  &\geq |g_2|^2+|g_1|^2+2|g_2||g_1|\cos\left(\frac{2\pi}3\right)\notag\\
  &=\left(|g_2|-\frac12 |g_1|\right)^2+\frac34 |g_1|^2.
\end{align}
Thus, the outage probability can be upper bounded
 \begin{align}
     &P_2^{\rm out}(\rho)\leq \Pr\left\{\left(|g_2|-\frac12 |g_1|\right)^2+\frac34 |g_1|^2<\epsilon_0\rho^{-1}\right\}\notag\\
     &\stackrel{(a)}{<} \Pr\left\{\left(|g_2|-\frac12 |g_1|\right)^2<\epsilon_0\rho^{-1},\ \frac34 |g_1|^2<\epsilon_0\rho^{-1}\right\}\notag\\
     &=\Pr\left\{|g_1|^2<\omega_1\rho^{-1}\right\}\notag\\
     &\quad \cdot \Pr\left\{\left(|g_2|-\frac12 |g_1|\right)^2<\epsilon_0\rho^{-1}\bigg| |g_1|<\sqrt{\omega_1}\rho^{-\frac12}\right\},
     \label{proof_lem_eq1}
 \end{align}
 where $\omega_1\triangleq \frac43\epsilon_0$ and $(a)$ follows from $\max(x,y)<x+y$ for $x,y>0$. From
 \eqref{cdf}, as $\rho\rightarrow\infty$,  we obtain \begin{align}\label{proof_lem_eq2}
 \Pr\left\{|g_1|^2<\omega_1\rho^{-1}\right\}\thickapprox -\omega_1\rho^{-1}\ln \left(\omega_1\rho\right)\doteq \rho^{-1}.\end{align}
 Furthermore, by setting $g_n=g_2$, $a=\frac12|g_1|$, $b=\epsilon_0$ and $c=\frac12\sqrt{\omega_1}$
 in Lemma \ref{lemma_basic}-(a), we have
\begin{align}  \label{proof_le_eq2}
    \!\Pr\left\{\left(|g_2|\!-\!\frac12 |g_1|\right)^2\!<\!\epsilon_0\rho^{-1}\bigg| |g_1|\!<\!\sqrt{\omega_1}\rho^{-\frac12}\right\}
    \dot\leq \rho^{-1}.
\end{align}
Substituting \eqref{proof_lem_eq2} and \eqref{proof_le_eq2} into \eqref{proof_lem_eq1}, we obtain
 $P_2^{\rm out}(\rho)\dot\leq \rho^{-2} $. Therefore, we evince that
{   $d_2(3)\geq 2$. Since the maximum  diversity order is $2$ when $N=2$, we conclude that $d_2(3)= 2$.}
\vspace{-1em}
\subsection{$N\geq 2$}
The  result  for  $N=2$ can be generalized to the  case  $N\geq 2$, as stated in
 the following proposition. 
 \begin{Proposition}\label{proposition(L=3N>2)}
 For $\forall \lambda\in\mathbb{R}^+$ and $\forall  N\geq2$, we have
 \begin{align}\label{proposition_upper}
   \Pr\left\{|G_{N}|^2<\lambda\rho^{-1}\right\}\dot\leq \rho^{-N}.
   \end{align}
 \end{Proposition}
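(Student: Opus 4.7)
The plan is to prove the proposition by induction on $N$, with the base case $N=2$ already handled in Section III-A. The inductive hypothesis is that $\Pr\{|G_{N-1}|^2 < \mu\rho^{-1}\} \dot\leq \rho^{-(N-1)}$ for every constant $\mu \in \mathbb{R}^+$, and I will lift this to $N$ by mimicking the decomposition used for $N=2$, but with the single summand $g_1$ replaced by the aggregate $G_{N-1}$.

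First I would establish a geometric fact needed to generalize the cosine lower bound in \eqref{G_2(L=3)lower}: for $L=3$, every $\arg(g_n)=\Theta_n$ lies in $[-\pi/3,\pi/3]$, so by iterated application of Lemma \ref{lemma_arg} (taking $\varphi_1=-\pi/3$, $\varphi_2=\pi/3$, which satisfy $-\pi/2\leq\varphi_1\leq\varphi_2\leq\pi/2$), we get $\arg(G_{N-1})\in[-\pi/3,\pi/3]$ as well. Consequently the angle between $G_{N-1}$ and $g_N$ is at most $2\pi/3$, and writing $|G_N|^2 = |G_{N-1}|^2+|g_N|^2+2|G_{N-1}||g_N|\cos(\text{angle})$ gives the same quadratic-completion bound as in \eqref{G_2(L=3)lower}, namely $|G_N|^2 \geq (|g_N|-\tfrac12|G_{N-1}|)^2 + \tfrac34|G_{N-1}|^2$.

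Second, I would replicate the chain of steps in \eqref{proof_lem_eq1}: bound $\Pr\{|G_N|^2<\lambda\rho^{-1}\}$ by $\Pr\{(|g_N|-\tfrac12|G_{N-1}|)^2<\lambda\rho^{-1},\ \tfrac34|G_{N-1}|^2<\lambda\rho^{-1}\}$, then factor this using the independence of $g_N$ from $G_{N-1}$ as a product of $\Pr\{|G_{N-1}|^2 < \tfrac43\lambda\rho^{-1}\}$ and a conditional probability. The induction hypothesis applied with $\mu=\tfrac43\lambda$ makes the first factor $\dot\leq \rho^{-(N-1)}$. For the conditional factor, I would invoke Lemma \ref{lemma_basic}(a) with $g_n\leftarrow g_N$, $a\leftarrow \tfrac12|G_{N-1}|$, $b\leftarrow \lambda$, and $c\leftarrow \tfrac12\sqrt{4\lambda/3}$: the conditioning guarantees $0<a<c\rho^{-1/2}$, and the independence of $g_N$ from $G_{N-1}$ lets Lemma \ref{lemma_basic}(a) apply pointwise in the conditioning value, yielding $\dot\leq \rho^{-1}$. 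Multiplying the two exponents gives $\rho^{-N}$.

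The main obstacle I anticipate is the interface with Lemma \ref{lemma_basic}(a): strictly speaking, that lemma is stated for a deterministic constant $a$, whereas here $a=\tfrac12|G_{N-1}|$ is random. I would address this by noting that $g_N$ is independent of $G_{N-1}$, so after conditioning on $G_{N-1}$ the lemma applies with $a$ a fixed value bounded by $c\rho^{-1/2}$, and the resulting exponential bound $\dot\leq\rho^{-1}$ is uniform in $a$ over the conditioning region; the law of total probability then preserves it. A small additional care point is that the case $|G_{N-1}|=0$ (i.e.\ $a=0$) is a measure-zero event but in any case trivially satisfies the $\rho^{-1}$ bound via \eqref{cdf}, so no degeneracy arises.
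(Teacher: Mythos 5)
Your proposal is correct and follows essentially the same route as the paper: induction on $N$ with the $N=2$ base case from Section III-A, the iterated application of Lemma \ref{lemma_arg} to confine $\arg(G_{N-1})$ to $[-\pi/3,\pi/3]$, the same quadratic-completion lower bound on $|G_N|^2$, and the same factorization into an induction-hypothesis term and a conditional term handled by Lemma \ref{lemma_basic}(a). Your extra remark about justifying the lemma for the random $a=\tfrac12|G_{N-1}|$ via conditioning is a small but welcome tightening of a point the paper leaves implicit.
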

 \begin{proof}
This proposition can be proved using mathematical induction. We consider the following induction steps.
\subsubsection{  $N=2$}
{ This  case is considered in Section \ref{subsection(L3N2)},
and \eqref{proposition_upper} holds
  by replacing $\epsilon_0$ with $\lambda$.}
\subsubsection{ $N=k-1$}
When $N=k-1$, $k\geq 3$, we assume that \eqref{proposition_upper}  holds true, i.e.,
\begin{align}\label{assum1}
   \Pr\left\{|G_{k-1}|^2<\lambda\rho^{-1}\right\}\dot\leq \rho^{-(k-1)}, \forall
   \lambda\in\mathbb{R}^+,
   \end{align}
   where $G_{k-1}=\sum_{n=1}^{k-1} g_n.$ This is the induction hypothesis in our problem formulation.

\subsubsection{  $N=k$}
Based on the induction hypothesis corresponding to $N=k-1$,
we proceed as follows.
Based on Lemma \ref{lemma_arg},
we have $\arg(G_{k-1})\in \left[-\frac\pi3,\frac\pi3\right]$, and hence
 $|G_{k}|^2=|g_{k}+G_{k-1}|^2$ can be bounded as follows
\begin{align}
  |G_k|^2\geq \left(|g_k|-\frac12 |G_{k-1}|\right)^2+\frac34 |G_{k-1}|^2,
\end{align}
with the aid of steps similar to \eqref{G_2(L=3)lower}.
In addition, similar to \eqref{proof_lem_eq1}, for $\forall \lambda\in\mathbb{R}^{+}$, we can obtain
\begin{align}\label{proof_P2_step3}
     &\!\!\!\Pr\left\{|G_{k}|^2<\lambda\rho^{-1}\right\}\leq \Pr\left\{|G_{k-1}|^2<\frac43\lambda\rho^{-1}\right\} \cdot\Pr\bigg\{\notag\\
     &\left.\left(|g_k|\!-\!\frac12 |G_{k-1}|\right)^2\!\!<\!\lambda\rho^{-1}\bigg| |G_{k-1}|\!<\!\left({\frac43\lambda}\right)^{\frac12}\rho^{-\frac12}\right\}.
 \end{align}
Finally, by applying  Lemma \ref{lemma_basic}-(a) and \eqref{assum1} to \eqref{proof_P2_step3},
we obtain $\Pr\left\{|G_{k}|^2<\lambda\rho^{-1}\right\}\dot\leq \rho^{-k}$  for $\forall\lambda\in\mathbb{R}^{+}$. This completes the proof.
 \end{proof}

 Based on Proposition \ref{proposition(L=3N>2)}, $P_N^{\rm out}(\rho)\dot\leq \rho^{-N}$,
{i.e., $d_N(3)\geq N$}  holds for
 $\forall N\geq2$. {Since the achievable diversity order cannot exceed $N$,
  we have
  the following theorem.}
\begin{theorem}\label{theoremL3}
If the number of  quantization levels is  $L\geq 3$,
the full diversity order of $N$  can be achieved, i.e.,  $ d_N(L)=N$, $\forall L\geq 3,\  N\geq 2$.
 \end{theorem}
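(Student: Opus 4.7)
The plan is to reduce the general case $L\geq 3$ to the already-established case $L=3$ by a monotonicity observation: the larger $L$ is, the smaller the phase-error intervals, hence the stronger the constructive combining and the tighter the outage bounds. Since the upper bound $d_N(L)\leq N$ is trivial (the full diversity order with perfect continuous phase shifts equals $N$, and quantization can only degrade performance), the task reduces to proving $d_N(L)\geq N$ for every $L\geq 3$.

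First I would note that for any $L\geq 3$, each phase error satisfies $\Theta_n\in[-\pi/L,\pi/L]\subseteq[-\pi/3,\pi/3]$. Then I would revisit the inductive argument inside Proposition 1 and check that every inequality used there remains valid, or becomes stronger, under this tighter support. The two crucial ingredients are: (i) the bound $\cos(|\Theta_i-\Theta_j|)\geq \cos(2\pi/L)\geq \cos(2\pi/3)=-\tfrac12$ used in the derivation of \eqref{G_2(L=3)lower}, which still gives the same (in fact a stronger) lower bound on $|G_k|^2$; and (ii) the hypothesis of Lemma \ref{lemma_arg}, which is applied to keep $\arg(G_{k-1})\in[-\pi/L,\pi/L]\subseteq[-\pi/3,\pi/3]$ throughout the induction. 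Both remain satisfied verbatim, so Proposition \ref{proposition(L=3N>2)} can be invoked with $L\geq 3$ without any modification.

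Next I would set $\lambda=\epsilon_0$ in the conclusion of Proposition \ref{proposition(L=3N>2)}, which by \eqref{definition_outage} yields $P_N^{\rm out}(\rho)\dot\leq\rho^{-N}$ for every $L\geq 3$ and every $N\geq 2$. Applying the definition \eqref{definition_diversity} then gives $d_N(L)\geq N$. Combined with the trivial upper bound $d_N(L)\leq N$, this proves $d_N(L)=N$ for all $L\geq 3$ and all $N\geq 2$, as claimed.

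I do not expect a genuine obstacle here, since all of the analytical work was already done for the $L=3$ case, which is the worst (least favorable) one among $L\geq 3$. The only point that requires care is verifying that the inductive construction in Proposition \ref{proposition(L=3N>2)} is \emph{monotone} in $L$ in the right sense: specifically, that replacing the worst-case interval $[-\pi/3,\pi/3]$ by the smaller interval $[-\pi/L,\pi/L]$ does not break the chain of conditional probability bounds or the applicability of Lemma \ref{lemma_basic}-(a). A brief verification confirms that both lemmas are used only through the uniform bound on the phase-error magnitude, so the monotonicity is immediate and the theorem follows.
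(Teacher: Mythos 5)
Your proposal is correct and follows essentially the same route as the paper: invoke Proposition \ref{proposition(L=3N>2)} with $\lambda=\epsilon_0$ to get $P_N^{\rm out}(\rho)\dot\leq\rho^{-N}$, then combine with the trivial bound $d_N(L)\leq N$. The only difference is that you explicitly verify the monotonicity in $L$ (that $[-\pi/L,\pi/L]\subseteq[-\pi/3,\pi/3]$ preserves both the cosine bound and the hypothesis of Lemma \ref{lemma_arg}), a point the paper leaves implicit when stating the theorem for all $L\geq 3$.
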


 Now, one may wonder  whether $L=3$ is the minimum required number of phase quantization levels for achieving
  the full diversity order. To answer this question, we analyze the setup for  $L=2$ in the next section.
\vspace{-1em}
\section{Diversity Order for $L=2$}\label{section_L2}
This section aims to prove that the full diversity of $N$ cannot be achieved if only two
quantization levels are used.
If $L\!=\!2$,  in particular, we have
$\mathcal{F}\!=\!\{0,\pi\}$ and $\Theta_n\in[-\frac{\pi}{2},\frac{\pi}{2}]$, $\forall n\!\in\!\{1,\ldots,N\}$.
To prove this result,  we first compute a lower bound for the outage probability, and prove that the corresponding
 exponential order is
$-\frac{N\!+\!1}{2}$.  Based on this result, we conclude that the diversity order cannot exceed     $\frac{N\!+\!1}{2}$.


\vspace{-1em}
\subsection{$N=2$}
As mentioned, 
we need to identify  a
 lower bound for the outage probability. 
 For fixed $|g_1|$ and $|g_2|$,   {$|G_2|$ attains   its minimum  value} if $|\Theta_1-\Theta_2|=\pi$.
Therefore, we are interested in computing  the outage probability conditioned on the event that
$|\Theta_1-\Theta_2|$ is close to $\pi$.
This is provided in  the following proposition.  
\begin{Proposition}\label{lemma_pi}
At high SNR, define the event $\varepsilon_1$ as   follows \begin{align}&\varepsilon_1\triangleq\Big\{\Theta_i\in\Big[-\frac{\pi}{2},-\frac{\pi}{2}+\theta \Big], \ \Theta_j\in\Big[\frac{\pi}{2}-\theta,\frac{\pi}{2} \Big],\notag\\
&\hspace{3cm} (i,j)=(1,2)\ \textrm{or}\ (2,1)\Big\},\end{align}
where $\theta\triangleq \rho^{-\frac12}$. Conditioned on the event $\varepsilon_1$,
 we have  \begin{align}
 \Pr\left\{|G_2|^2\!<\!\epsilon_0\rho^{-1}|\varepsilon_1\right\}
   &\dot\geq \rho^{-\frac12}.
 \end{align}
\end{Proposition}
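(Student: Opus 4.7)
The plan is to turn the conditioning event $\varepsilon_1$ into a deterministic bound on $\cos(\Theta_1-\Theta_2)$ and then reduce the problem to a joint probability involving only the magnitudes $r_n\triangleq |g_n|$, which I can control through Lemma \ref{lemma_basic}-(b).

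First I would write $|G_2|^2 = r_1^2+r_2^2+2r_1 r_2\cos(\Theta_1-\Theta_2)$. On the event $\varepsilon_1$, the two phases lie in antipodal subarcs of width $\theta=\rho^{-1/2}$, so $|\Theta_1-\Theta_2|\in[\pi-2\theta,\pi]$ and therefore $\cos(\Theta_1-\Theta_2)\leq-\cos(2\theta)\leq -1+2\theta^2=-1+2\rho^{-1}$. Plugging this in gives the deterministic pointwise bound
\begin{equation}
|G_2|^2 \;\leq\; (r_1-r_2)^2 + 4 r_1 r_2\,\rho^{-1} \qquad\text{on }\varepsilon_1.
\end{equation}
Consequently, $|G_2|^2<\epsilon_0\rho^{-1}$ is implied by the conjunction of $(r_1-r_2)^2<\tfrac{\epsilon_0}{2}\rho^{-1}$ and $r_1 r_2<\tfrac{\epsilon_0}{8}$.

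Next I would exploit independence. Since $[\mathbf{h}_{SI}]_n$ and $[\mathbf{h}_{ID}]_n$ are circularly symmetric complex Gaussians, the magnitudes $r_1,r_2$ are independent of the phase errors $\Theta_1,\Theta_2$, and the event $\varepsilon_1$ depends only on the latter. Hence the conditional probability in the statement equals the unconditional probability
\begin{equation}
\Pr\!\left\{|G_2|^2<\epsilon_0\rho^{-1}\,\big|\,\varepsilon_1\right\}\;\geq\;\Pr\!\left\{(r_1-r_2)^2<\tfrac{\epsilon_0}{2}\rho^{-1},\; r_1 r_2<\tfrac{\epsilon_0}{8}\right\}.
\end{equation}

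I would then fix a sufficiently small constant $a_0>0$ (for instance, $a_0$ such that $4 a_0^2<\epsilon_0/8$) and condition on $r_1\in[a_0,2a_0]$, which has strictly positive, $\rho$-independent probability. On this set, any $r_2$ satisfying $(r_2-r_1)^2<\tfrac{\epsilon_0}{2}\rho^{-1}$ will, for $\rho$ large enough, lie in a vanishing neighbourhood of $[a_0,2a_0]$ and hence automatically satisfy $r_1 r_2<\epsilon_0/8$, making the product-constraint redundant. By the independence of $r_1$ and $r_2$, and Lemma \ref{lemma_basic}-(b) applied with $a=r_1\doteq\rho^{0}$ and $b=\epsilon_0/2$, the inner conditional probability is $\doteq \rho^{-1/2}$. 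Multiplying by the $\rho^{0}$ factor from $\Pr\{r_1\in[a_0,2a_0]\}$ yields the claimed lower bound $\rho^{-1/2}$.

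The main obstacle is not analytical but organisational: one has to be careful that the two cost terms $(r_1-r_2)^2$ and $4 r_1 r_2 \rho^{-1}$ are both simultaneously kept below $\tfrac{\epsilon_0}{2}\rho^{-1}$ without accidentally forcing $r_2$ to a $\rho$-dependent target (which would degrade the exponential order from $\rho^{-1/2}$ to $\rho^{-1}$ via Lemma \ref{lemma_basic}-(a) instead of (b)). Restricting $r_1$ to a fixed compact positive interval $[a_0,2a_0]$ sidesteps this by ensuring the target for $r_2$ is itself $\doteq\rho^0$, which is exactly the regime in which Lemma \ref{lemma_basic}-(b) delivers $\rho^{-1/2}$ rather than $\rho^{-1}$.
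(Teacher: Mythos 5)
Your proposal is correct and follows essentially the same route as the paper: bound $\cos(\Theta_1-\Theta_2)$ deterministically on $\varepsilon_1$, reduce to an event involving only the magnitudes (the paper does this implicitly via the same independence of magnitudes and phase errors), restrict one magnitude to a fixed compact interval bounded away from zero so that its probability is $\doteq\rho^0$ and the product/cross term becomes harmless, and invoke Lemma \ref{lemma_basic}-(b) to obtain the $\rho^{-1/2}$ factor. The only cosmetic differences are that you swap the roles of $|g_1|$ and $|g_2|$ and complete the square as $(r_1-r_2)^2+4r_1r_2\rho^{-1}$ rather than $(|g_1|-\cos(2\theta)|g_2|)^2+\sin^2(2\theta)|g_2|^2$.
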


\begin{proof}
Without loss of generality for the event $\varepsilon_1$, we assume $\Theta_1\in\big[-\frac{\pi}{2},-\frac{\pi}{2}+\theta \big], \ \Theta_2\in\big[\frac{\pi}{2}-\theta,\frac{\pi}{2} \big]$. Hence  $\pi\!-\!2\theta\leq\Theta_2\!-\!\Theta_1\!\leq\! \pi$ and $|G_2|^2$ can be upper bounded as follows
  \begin{align}\label{proof_lemma_step1}
  &|G_2|^2
  =
  |g_1|^2+|g_2|^2+2|g_1||g_2|\cos{(\Theta_2-\Theta_1)}\notag\\
  &\leq |g_1|^2+|g_2|^2-2\cos(2\theta)|g_1||g_2|\notag\\
  &=(|g_1|-\cos(2\theta)|g_2|)^2+\sin^2(2\theta)|g_2|^2\notag\\
  &<\! (|g_1|-\cos(2\theta) |g_2|)^2\!+4\rho^{-1}|g_2|^2\!\triangleq \Lambda\left(|g_1|^2,|g_2|^2\right),
  \end{align}
  where the last step is based on the definition $\theta=\rho^{-\frac12}$ and
   the fact that $\sin(\theta)<\theta$.
Furthermore, 
  we have
  \begin{align}
    &\quad\Pr\left\{|G_2|^2<\epsilon_0\rho^{-1}|\varepsilon_1\right\}\notag\\
    &>
    \Pr\left\{\Lambda\left(|g_1|^2,|g_2|^2\right)<\epsilon_0\rho^{-1}\right\}\notag\\
    &{\stackrel{(a)}{>}
    \Pr\Big\{\Lambda\left(|g_1|^2,|g_2|^2\right)<\epsilon_0\rho^{-1},\ {\omega_2}\leq |g_2|^2\leq {2\omega_2}\Big\}}\notag\\
    &=\Pr\left\{{\omega_2}\leq |g_2|^2\leq{2\omega_2}\right\}\cdot
    \Pr\Big\{\Lambda\left(|g_1|^2,|g_2|^2\right)
    \notag\\
    &\qquad\qquad\qquad\qquad\qquad\left.<\epsilon_0\rho^{-1}\Big | \sqrt{\omega_2}\leq |g_2|\leq \sqrt{2\omega_2}\right\}\notag\\
    &\stackrel{(b)}{>} \left[F_{|g_2|^2}\left({2\omega_2}\right)-F_{|g_2|^2}\left({\omega_2}\right)\right]\cdot
    \Pr\Big\{\big(|g_1|\notag\\
    &\qquad \quad -\cos(2\theta) |g_2|)^2<\left.\frac{\epsilon_0\rho^{-1}}{2}\bigg |
    \sqrt{\omega_2}\leq |g_2|\leq\sqrt{2\omega_2}\right\}\notag\\
    &\doteq \rho^{-\frac12},
      \label{proof_lemma_step2}
    \end{align}
    where $\omega_2\triangleq \frac{\epsilon_0}{16}$; {$(a)$ holds since $\Pr\{A\}>\Pr\{A,B\}$
    for  all events $A$ and $B$ satisfying $B\nsubseteq A$,} and is used   to
    limit the range of $|g_2|$ to $[\sqrt{\omega_2},\sqrt{2\omega_2}]$ so that
$|g_2|\doteq \rho^0$  holds for $ \forall |g_2|\in[\sqrt{\omega_2},\sqrt{2\omega_2}]$  and Lemma \ref{lemma_basic}-(b) can be applied in the last step; $(b)$ is due to
the fact that $4\rho^{-1}|g_2|^2<\frac{\epsilon_0\rho^{-1}}2$ if $|g_2|<\sqrt{2\omega_2}$; and the last step is based on Lemma \ref{lemma_basic}-(b)
 (setting $a=\cos(2\theta) |g_2|$ and $b=\epsilon_0/2$).
\end{proof}
\begin{Remark}
The motivation for setting  $\theta=\rho^{-\frac12}$ is to make sure that
 the term $4\theta^2 |g_2|^2$  in \eqref{proof_lemma_step1} satisfies   $4\theta^2 |g_2|^2<\epsilon_0\rho^{-1}$ even if $|g_2|^2\doteq \rho^0$. In this case, $|g_2|^2$
 can take any values in  the range  $[\omega_2,2\omega_2]$, so that $|g_2|^2\doteq \rho^0$
is satisfied and 
the conditional outage probability $\Pr\left\{|G_2|^2<\epsilon_0\rho^{-1}|\varepsilon_1\right\}$
 can be
exponentially lower bounded by $\rho^{-\frac12}$ as stated in \eqref{proof_lemma_step2} . 
\end{Remark}

Moreover, the probability of the event $\varepsilon_1$ can be computed as
$\Pr\{\varepsilon_1\}=2\left(\frac{\theta}{\pi}\right)^2\doteq \rho^{-1}$, since
\begin{align}\label{probabi}
\!\!\Pr\Big\{\!\Theta_i\!\in\!\Big[-\!\frac{\pi}{2},-\!\frac{\pi}{2}\!+\!\theta \Big]\!\Big\}
\!=\!\Pr\Big\{\!\Theta_i\!\in\!\Big[\frac{\pi}{2},\frac{\pi}{2}\!-\!\theta \Big]\!\Big\}\!=\!\frac\theta\pi,
\end{align} where $i=1,2$.
Therefore, based on Proposition \ref{lemma_pi},
the outage probability can be exponentially lower bounded as follows
\begin{align}\label{Po(L=2N=2)}
  &\quad P_2^{\rm out}(\rho)\geq \Pr\{\varepsilon_1\}\cdot\Pr\left\{|G_2|^2<\epsilon_0\rho^{-1}|\varepsilon_1\right\}
  \dot\geq \rho^{-\frac32}.
\end{align}
Thus, $d_N(2)\leq \frac32$, which is smaller than the theoretical full diversity order of $N=2$.
\vspace{-1em}
\subsection{ $N\geq 2$}
Proposition  \ref{lemma_pi} can be generalized  to the case  $N\geq 2$,
 as shown in  the following corollary.
\begin{Corollary}\label{lemma_pi2}
At high SNR, define the event $\varepsilon_2$ as  follows \begin{align}&\varepsilon_2\triangleq\Big\{\Theta_i\in\Big[-\frac{\pi}{2},-\frac{\pi}{2}+\theta \Big], \ \{\Theta_j\}_{j\in\{1,\ldots,N\},j\neq i}\in\notag\\
 &\hspace{3cm}\Big[\frac{\pi}{2}-\theta,\frac{\pi}{2} \Big],\ \forall {i\in\{1,\ldots, N\}}\Big\}.\end{align}
Conditioned on the event $\varepsilon_2$,
 we have  $\Pr\left\{|G_N|^2<\epsilon_0\rho^{-1}|\varepsilon_2\right\}\dot\geq \rho^{-\frac12}$.
\end{Corollary}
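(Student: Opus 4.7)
The plan is to reduce the general $N\geq 2$ case to the two-term argument already developed in the proof of Proposition \ref{lemma_pi} by grouping the $N-1$ terms whose phase errors lie near $\pi/2$ into a single complex sum. By the symmetry built into $\varepsilon_2$, I would fix $i=1$ without loss of generality, so that $\Theta_1\in[-\pi/2,-\pi/2+\theta]$ and $\Theta_j\in[\pi/2-\theta,\pi/2]$ for $j=2,\ldots,N$. Setting $S\triangleq\sum_{j=2}^N g_j$, an induction on $N-1$ that applies Lemma \ref{lemma_arg} successively on the common interval $[\pi/2-\theta,\pi/2]\subset[-\pi/2,\pi/2]$ yields $\arg(S)\in[\pi/2-\theta,\pi/2]$. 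Consequently $\arg(S)-\Theta_1\in[\pi-2\theta,\pi]$, so $\cos(\arg(S)-\Theta_1)\leq -\cos(2\theta)$, exactly mimicking the angular configuration that drives the $N=2$ proof.

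Expanding $|G_N|^2=|g_1+S|^2=|g_1|^2+|S|^2+2|g_1||S|\cos(\arg(S)-\Theta_1)$ and completing the square then gives the key upper bound
\[|G_N|^2\leq (|g_1|-\cos(2\theta)|S|)^2+\sin^2(2\theta)|S|^2<(|g_1|-\cos(2\theta)|S|)^2+4\rho^{-1}|S|^2,\]
where the last step uses $\sin(2\theta)<2\theta=2\rho^{-1/2}$. This is the direct analogue of $\Lambda(|g_1|^2,|g_2|^2)$ in the proof of Proposition \ref{lemma_pi} with $|g_2|$ replaced by $|S|$. To force $|S|$ to behave as an $\rho^0$-order quantity, I would further restrict $|g_j|^2\in[\omega_N,2\omega_N]$ for every $j\in\{2,\ldots,N\}$, choosing $\omega_N>0$ small enough that $|S|^2\leq\bigl(\sum_{j=2}^N|g_j|\bigr)^2\leq 2(N-1)^2\omega_N<\epsilon_0/8$ (which reduces to $\omega_2=\epsilon_0/16$ when $N=2$). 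Since these $N-1$ events are independent of each other and of $\varepsilon_2$, their joint probability equals $\prod_{j=2}^N\bigl[F_{|g_j|^2}(2\omega_N)-F_{|g_j|^2}(\omega_N)\bigr]\doteq\rho^0$.

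Under this additional conditioning $4\rho^{-1}|S|^2<\epsilon_0\rho^{-1}/2$, so $|G_N|^2<\epsilon_0\rho^{-1}$ is implied by the single event $(|g_1|-\cos(2\theta)|S|)^2<\epsilon_0\rho^{-1}/2$. Applying Lemma \ref{lemma_basic}-(b) with $a=\cos(2\theta)|S|\doteq\rho^0$ and $b=\epsilon_0/2$ shows that this conditional probability is $\doteq\rho^{-1/2}$, and multiplying by the $\doteq\rho^0$ factor from the magnitude restrictions delivers $\Pr\{|G_N|^2<\epsilon_0\rho^{-1}\mid\varepsilon_2\}\dot\geq\rho^{-1/2}$, as claimed. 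The main obstacle I expect is the book-keeping needed to justify treating $|S|$ as an $\rho^0$-order constant while the $\Theta_j$'s are pinned into the narrow arcs defining $\varepsilon_2$; this rests on the independence of the phase errors from the channel magnitudes $|g_j|$ and on checking that $[\pi/2-\theta,\pi/2]$ remains inside $[-\pi/2,\pi/2]$ for all sufficiently large $\rho$, both of which are routine once the structural decomposition above is in place.
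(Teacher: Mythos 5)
Your proposal is correct and follows essentially the same route as the paper: the paper likewise fixes $i=1$ without loss of generality, aggregates $\tilde{G}_{N-1}=\sum_{j=2}^N g_j$, invokes Lemma \ref{lemma_arg} to place $\arg(\tilde{G}_{N-1})\in[\frac{\pi}{2}-\theta,\frac{\pi}{2}]$, and then repeats the derivation of Proposition \ref{lemma_pi} with $g_2$ replaced by $\tilde{G}_{N-1}$. If anything, you supply more detail than the paper (the explicit choice of $\omega_N$ and the check that $|S|\doteq\rho^0$ under the magnitude restriction), and those details are sound.
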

\begin{proof}
For the event $\varepsilon_2$, assume $\Theta_1\in\big[-\frac{\pi}{2},-\frac{\pi}{2}+\theta \big]$
and $ \{\Theta_j\}_{j\in\{2,\ldots,N\}}\in
 \big[\frac{\pi}{2}-\theta,\frac{\pi}{2} \big]$ without loss of generality.
Moreover,
define $\tilde{G}_{N-1}\triangleq \sum_{j=2}^N g_j$, and hence $G_N=g_1+\tilde{G}_{N-1}$.
According to Lemma \ref{lemma_arg}, we obtain
$\arg (\tilde{G}_{N-1})\in\big[\frac{\pi}{2}-\theta,\frac{\pi}{2} \big]$.
Therefore, by using a similar analytical derivation as for Proposition  \ref{lemma_pi}, 
 the corollary  can be proved by replacing $g_2$ with $\tilde{G}_{N-1}$ in \eqref{proof_lemma_step1}
 and \eqref{proof_lemma_step2}.
\end{proof}

In addition, from \eqref{probabi}, the probability of the event $\varepsilon_2$ is $\Pr\{\varepsilon_2\}=N\left(\frac{\theta}{\pi}\right)^N\doteq \rho^{-\frac N2}$.
Therefore, we obtain
\begin{align}\label{Po(L=2N>2)}
  P_N^{\rm out}(\rho)\geq \Pr\{\varepsilon_2\}\cdot\Pr\left\{|G_N|^2<\epsilon_0\rho^{-1}|\varepsilon_2\right\}
  \dot\geq \rho^{-\frac {N+1}2}.
\end{align}

From \eqref{Po(L=2N>2)}, the upper bound of the diversity order for $L=2$
 can be summarized in the following theorem.
\begin{theorem}\label{theoremL2}
If the number of quantization levels is  $L=2$,
the diversity order   is upper bounded by
  $d_N(2)\leq \frac{N+1}2$, $\forall N\geq 2$.
\end{theorem}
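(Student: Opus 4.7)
The plan is that the theorem is essentially a bookkeeping corollary of the two pieces of work already in place: Corollary \ref{lemma_pi2}, which gives a conditional lower bound on the outage probability under the event $\varepsilon_2$, and an elementary probability computation for $\Pr\{\varepsilon_2\}$. I would simply multiply the two exponential orders, then invoke the definition of diversity order in \eqref{definition_diversity}.

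First I would compute $\Pr\{\varepsilon_2\}$. For $L=2$ the phase errors $\Theta_1,\ldots,\Theta_N$ are i.i.d.\ uniform on $[-\pi/2,\pi/2]$, so for each $n$ the probability that $\Theta_n$ falls in an interval of length $\theta$ is exactly $\theta/\pi$, as already noted in \eqref{probabi}. The event $\varepsilon_2$ is the disjoint union over $i\in\{1,\ldots,N\}$ of the events ``$\Theta_i\in[-\pi/2,-\pi/2+\theta]$ and $\Theta_j\in[\pi/2-\theta,\pi/2]$ for all $j\neq i$.'' Since for sufficiently large $\rho$ the two sub-intervals of length $\theta=\rho^{-1/2}$ are disjoint, these $N$ events are disjoint, and by independence each has probability $(\theta/\pi)^N$. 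Hence $\Pr\{\varepsilon_2\}=N(\theta/\pi)^N\doteq\rho^{-N/2}$.

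Next I would chain this with Corollary \ref{lemma_pi2} via the trivial conditioning inequality
\begin{align}
P_N^{\rm out}(\rho)
&=\Pr\{|G_N|^2<\epsilon_0\rho^{-1}\}\notag\\
&\geq \Pr\{\varepsilon_2\}\cdot\Pr\{|G_N|^2<\epsilon_0\rho^{-1}\,|\,\varepsilon_2\}\notag\\
&\dot\geq \rho^{-N/2}\cdot\rho^{-1/2}=\rho^{-(N+1)/2},
\end{align}
which is exactly the display \eqref{Po(L=2N>2)} that the excerpt already writes down. Finally, applying the definition in \eqref{definition_diversity}, $-\log P_N^{\rm out}(\rho)/\log\rho\leq (N+1)/2+o(1)$ as $\rho\to\infty$, which yields $d_N(2)\leq (N+1)/2$.

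There is really no main obstacle here: all the hard technical work — specifically, constructing the right rare event whose conditional outage order is only $\rho^{-1/2}$ rather than $\rho^{-1}$ (this is the crux handled in Proposition \ref{lemma_pi} and its generalization Corollary \ref{lemma_pi2}) — has already been done. The only small point worth being careful about is that the two sub-intervals appearing in the definition of $\varepsilon_2$ are indeed disjoint for large $\rho$ (so that summing the probabilities over the $N$ choices of the ``odd-one-out'' index $i$ is a genuine equality, not an overcount), which is immediate from $\theta=\rho^{-1/2}\to 0$.
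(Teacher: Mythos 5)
Your proposal is correct and follows essentially the same route as the paper: it combines Corollary \ref{lemma_pi2} with the computation $\Pr\{\varepsilon_2\}=N(\theta/\pi)^N\doteq\rho^{-N/2}$ to obtain the lower bound \eqref{Po(L=2N>2)} and then reads off the diversity-order bound from \eqref{definition_diversity}. The only addition is your explicit check that the $N$ constituent events of $\varepsilon_2$ are disjoint for large $\rho$, a small point the paper leaves implicit.
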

\begin{Remark}
Combing Theorem \ref{theoremL3} and Theorem \ref{theoremL2}, the minimum required number of
 phase quantization levels to
  achieve the full diversity order of $N$ is equal to $L=3$.
\end{Remark}

\begin{figure}[tbp]
 \begin{center}
{
\includegraphics[width=0.85\columnwidth]{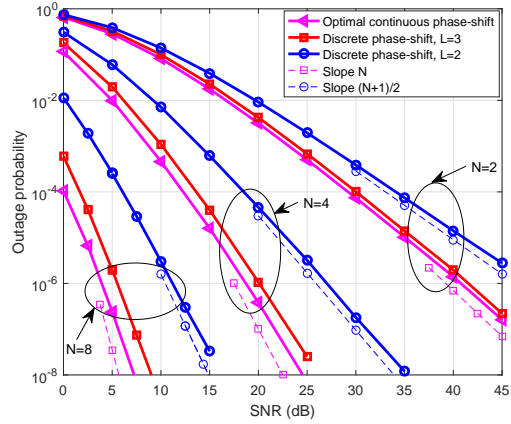}}
\end{center}
\vspace*{-1.5em}
\caption{\small{The outage probability  versus the  SNR, for different values of $L$ and $N$, where $R_0=1$ bpcu}.}\label{Outagesnr}\vspace{-0.5em}
\end{figure}
\vspace{-1em}

\section{Impact of Direct Link}\label{section_discussion}
If the direct link  between $S$ and $D$ cannot be ignored,  Theorem \ref{theoremL3} and Theorem \ref{theoremL2} can be modified as follows.

Let $h_{SD}$ denote the channel of the direct link. The optimal value of $\phi_n$ in Section \ref{section_model} is
 $\phi_n^*=\arg(h_{SD})-\arg([\mathbf{h}_{SI}]_n[\mathbf{h}_{ID}]_n)$, $n\in\{1,\ldots,N\}$. Moreover, the SNR is $\gamma_D=\rho\Big||h_{SD}|+\eta\sqrt{\Omega_S\Omega_I}G_N\Big|^2$. Based on
Lemma \ref{lemma_arg},  $|\arg(G_N)|$ cannot exceed $\frac{\pi}2$ even if $ L= 2$, so $\gamma_D\geq \rho(|h_{SD}|^2+\eta^2{\Omega_S\Omega_I}|G_N|^2)$ for $\forall L\geq 2$. Based on this lower bound on $\gamma_D$, Theorem \ref{theoremL3} still holds by replacing the diversity order  $N$  with $N+1$.

Furthermore, $\gamma_D$ can be upper bounded as $\gamma_D\leq \rho(|h_{SD}|+\eta\sqrt{\Omega_S\Omega_I}|G_N|)^2$.
Therefore,   Theorem \ref{theoremL2}  holds by replacing the upper bound  $\frac{N+1}2$ with $\frac{N+3}2$. The details are omitted due to space limitations.
\vspace{-1em}
\section{Numerical Results}\label{section numerical}
In this section, Monte Carlo simulations are provided to verify the accuracy of the analytical results about
the diversity order. The direct link between $S$ and $D$ does not exist, and we set $[\mathbf{h}_{SI}]_n\sim\mathcal{CN}(0,1)$ and $[\mathbf{h}_{ID}]_n\sim\mathcal{CN}(0,0.5)$, $\forall
n\in\{1,\ldots,N\}$, i.e., $\Omega_S=1$ and $\Omega_I=0.5$. We also set   $\eta=0.8$.
\begin{figure}[tbp]
 \begin{center}
{
\includegraphics[width=3in,height=2.38in]{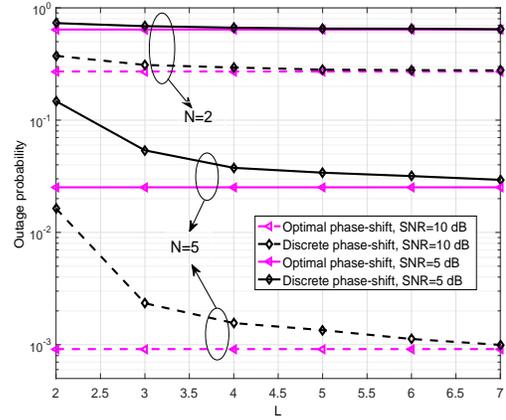}}
\end{center}
\vspace{-1.5em}
\caption{\small{The outage probability versus $L$,
for different values of the transmit SNR and $N$, where $R_0=2$ bpcu}.}\label{OutageL}\vspace{-0.5em}
\end{figure}


The effect of phase quantization  on the diversity order of RIS-aided transmission is illustrated in Fig. \ref{Outagesnr}.
{For ease of interpretation, we reports the lines  with slopes $\frac{N+1}2$ and $N$ as well.
 When $L=2$, for $N=2$, $N=4$ and $N=8$, we can  observe that the slope of the outage probability
curves is not smaller than $-1.5$, $-2.5$ and $-4.5$ (i.e.,
 the diversity order does not exceed  $1.5$, $2.5$ and $4.5$) at high SNRs, respectively.}
This observation is consistent with  Theorem \ref{theoremL2}.
The main reason is that the phase error is  uniformly distributed over
 $[- \frac{\pi}{2}, \frac{\pi}{2}]$, and there exists a  probability that
  each phase error is close to the boundary of $- \frac{\pi}{2}$ or $ \frac{\pi}{2}$.
  In this case, the amplitudes of the channel coefficients  mutually cancel out,
  which results in a loss of the diversity order.
When $L\geq 3$, on the other hand, we  observe  that the corresponding  curves are parallel to the curves
 obtained in the absence of phase errors. We conclude that the full diversity order can be achieved as long as $L\geq3$, which is consistent with Theorem \ref{theoremL3}.


In Fig. \ref{OutageL}, the outage probability  as a function of the
phase quantization level $L$ is shown. We note that the outage performance is a decreasing function of  $L$.  {As  $L$ increases, the outage probability  approaches  the lower bound with perfect phase shifts.}
For both setups  $N=2$ and $N=5$, the outage probability has the largest values if $L=2$. When $N=2$, the outage probability  approaches the lower bound with perfect phase shifts provided that $L\geq 3$. When $N=5$, the gap between the curve with phase errors and the curve with perfect phase shifts is larger. However, the gap reduces as $L$ increases. Therefore, the  quantization errors for phase shifts of the RIS do not lead to significant  outage performance loss, which is a promising finding for the deployment of RISs in wireless systems.
\vspace{-0.8em}
 \section{Conclusion}
This letter investigated the diversity order of RIS-aided communication systems with discrete phase shifts.
 The main contribution of this letter was to unveil  the minimum number of phase quantization levels to achieve  full diversity. In particular, the full diversity order is proved to be achievable, if and only if at least three quantization levels are used. Simulation results verified  the theoretical finding and showed that the outage performance loss is negligible even for moderate values of the quantization levels.




%
\vspace{-1em}
\appendices
\section{Proof of Lemma \ref{lemma_basic}}\label{proof_lemma_basic}
We consider different values of $a$ as follows.
 \subsubsection{For $0<a< c\rho^{-\frac12}$} We have
  \begin{align}&\quad \Pr\{(|g_n|-a)^2<b\rho^{-1}\}\notag\\
  &=\Pr\left\{
  -\sqrt{b}\rho^{-\frac12}+a<|g_n|<\sqrt{b}\rho^{-\frac12}+a\right\}\label{step1}\\
  &\leq \Pr\left\{|g_n|^2<\left(\sqrt{b}\rho^{-\frac12}+c\rho^{-\frac12}\right)^2\right\}\notag\\
  &=1- 2\left(\sqrt{b}+c\right)\rho^{-\frac12}
  K_1\left(2\left(\sqrt{b}+c\right)\rho^{-\frac12}\right)\notag\\
  &\thickapprox -\left(\sqrt{b}+c\right)^2\rho^{-1}
  \ln\left(\left(\sqrt{b}+c\right)^2\rho^{-1}\right),
  \end{align}
  as $\rho\rightarrow \infty$, where the last step is based on \eqref{cdf}.
  Thus, $ \Pr\{(|g_n|-a)^2<b\rho^{-1}\}\dot\leq \rho^{-1}$.

 \subsubsection{For $a>0$ and $a\doteq \rho^0$} Based on \eqref{step1}, we have
    \begin{align}&\quad \Pr\{(|g_n|-a)^2<b\rho^{-1}\}\notag\\
  &=\Pr\left\{
  \left(-\sqrt{b}\rho^{-\frac12}+a\right)^2<|g_n|^2<\left(\sqrt{b}\rho^{-\frac12}+a\right)^2\right\}\notag\\
  &=
  \tilde{K}_1\left(2a-z\right)-\tilde{K}_1\left(2a+z\right), \label{step2}
  \end{align}
  where $\tilde{K}_1(x)\triangleq xK_1(x)$, $z\triangleq2\sqrt{b}\rho^{-\frac12}$ and $2a\gg z$ holds as $\rho\rightarrow \infty$. From \cite[Eq. 8.446]{gradsh2000table}, the function $\tilde{K}_1(x)$
  has the following series representation:
  \begin{align}
    \tilde{K}_1(x)=1-\sum_{k=0}^{\infty}\frac{\left(\frac{x}{2}\right)^{2k+2}\left(A(k)-\ln\frac{x}{2}
    \right)}{k!(k+1)!},
  \end{align}
  where $A(k)\triangleq\frac12 \psi(k+1)+\frac12 \psi(k+2)$, and the function $\psi(\cdot)$ is defined
  in \cite[Eq. 8.365.3]{gradsh2000table}.
  Therefore, by using the Taylor expansion at the point $2a$,  we have
  \begin{align}\label{appendix_eq1}
    &\tilde{K}_1\left(2a-z\right)=\tilde{K}_1(2a)-\tilde{K}'_1(2a)z+o(z),
  \end{align}
  where $o(\cdot)$ denotes higher order terms and $\tilde{K}'_1(2a)$ can be expressed as follows
  \begin{align}
   \tilde{K}'_1(2a)= -\sum_{k=0}^{\infty}\frac{a^{2k+1}\left[(k+1)(A(k)-\ln a)+\frac12\right]}
    {k!(k+1)!}.
  \end{align}
  Similarly, $\tilde{K}_1\left(2a+z\right)$ can be expressed as
  \begin{align}\label{appendix_eq2}
    &\tilde{K}_1\left(2a+z\right)=\tilde{K}_1(2a)+\tilde{K}'_1(2a)z+o(z).
  \end{align}
Inserting \eqref{appendix_eq1} and \eqref{appendix_eq2} into \eqref{step2}, we have
  \begin{align} \Pr\{(|g_n|-a)^2<b\rho^{-1}\}
  \thickapprox - 4\sqrt{b}\rho^{-\frac12}\tilde{K}'_1(2a)\doteq \rho^{-\frac12}.  \end{align}


\vspace{-0.8em}

\bibliographystyle{IEEEtran}
\bibliography{refference}

\begin{thebibliography}{10}
\providecommand{\url}[1]{#1}
\csname url@samestyle\endcsname
\providecommand{\newblock}{\relax}
\providecommand{\bibinfo}[2]{#2}
\providecommand{\BIBentrySTDinterwordspacing}{\spaceskip=0pt\relax}
\providecommand{\BIBentryALTinterwordstretchfactor}{4}
\providecommand{\BIBentryALTinterwordspacing}{\spaceskip=\fontdimen2\font plus
\BIBentryALTinterwordstretchfactor\fontdimen3\font minus
  \fontdimen4\font\relax}
\providecommand{\BIBforeignlanguage}[2]{{%
\expandafter\ifx\csname l@#1\endcsname\relax
\typeout{** WARNING: IEEEtran.bst: No hyphenation pattern has been}%
\typeout{** loaded for the language `#1'. Using the pattern for}%
\typeout{** the default language instead.}%
\else
\language=\csname l@#1\endcsname
\fi
#2}}
\providecommand{\BIBdecl}{\relax}
\BIBdecl

\bibitem{IRS_Marco_Eurasip}
{M. Di Renzo, {\em et al.}}, \emph{EURASIP Journal on Wireless Communications
  and Networking}, vol. 2019, no.~1, pp. 1--20, May 2019.

\bibitem{IRS_Marco_Smart}
------, ``Smart radio environments empowered by reconfigurable intelligent
  surfaces: How it works, state of research, and road ahead,''
  \emph{arXiv:2004.09352}.

\bibitem{huang2020holog}
C. Huang, S. Hu, {\em et al.}, ``Holographic MIMO surfaces for 6G wireless
  networks: opportunities, challenges, and trends,'' {\em arXiv:1911.12296}.

\bibitem{IRS_Survey}
Q.~{Wu} and R.~{Zhang}, ``Towards smart and reconfigurable environment:
  Intelligent reflecting surface aided wireless network,'' \emph{IEEE Commun.
  Mag.}, vol.~58, no.~1, pp. 106--112, Nov. 2020.

\bibitem{IRS_Marco_Relay}
{M. Di Renzo, {\em et al.}}, ``Reconfigurable intelligent surfaces vs.
  relaying: Differences, similarities, and performance comparison,'' \emph{IEEE
  Open J. Commun. Society}, vol.~1, pp. 798--807, June 2020.

\bibitem{wu2019intell}
Q.~{Wu} and R.~{Zhang}, ``Intelligent reflecting surface enhanced wireless
  network via joint active and passive beamforming,'' \emph{IEEE Trans.
  Wireless Commun.}, vol.~18, no.~11, pp. 5394--5409, Nov. 2019.

\bibitem{wu2020beamforming}
------, ``Beamforming optimization for wireless network aided by intelligent
  reflecting surface with discrete phase shifts,'' \emph{IEEE Trans.
  Communications}, vol.~68, no.~3, pp. 1838--1851, Mar. 2020.

\bibitem{IRS_NOMA_Ding}
Z.~{Ding}, R.~{Schober}, and H.~V. {Poor}, ``On the impact of phase shifting
  designs on {IRS-NOMA},'' \emph{IEEE Wireless Commun. Lett. (Early Access)}.

\bibitem{IRS_NOMA_Zheng}
B.~{Zheng}, Q.~{Wu}, and R.~{Zhang}, ``Intelligent reflecting surface-assisted
  multiple access with user pairing: {NOMA} or {OMA}?'' \emph{IEEE Commun.
  Lett.}, vol.~24, no.~4, pp. 753--757, Jan. 2020.

\bibitem{IRS_Security_Chu}
Z.~{Chu}, W.~{Hao}, P.~{Xiao}, and J.~{Shi}, ``Intelligent reflecting surface
  aided multi-antenna secure transmission,'' \emph{IEEE Wireless Commun.
  Lett.}, vol.~9, no.~1, pp. 108--112, Sep. 2020.

\bibitem{IRS_Relay_Abdul}
Z.~{Abdullah}, G.~{Chen}, S.~{Lambotharan}, and J.~A. {Chambers}, ``A hybrid
  relay and intelligent reflecting surface network and its ergodic performance
  analysis,'' \emph{IEEE Wireless Commun. Lett. (Early Access)}.

\bibitem{IRS_Phase_Error}
M.~{Badiu} and J.~P. {Coon}, ``Communication through a large reflecting surface
  with phase errors,'' \emph{IEEE Wireless Commun. Lett.}, vol.~9, no.~2, pp.
  184--188, Oct. 2020.

\bibitem{zhang2020reconf}
H.~{Zhang}, B.~{Di}, L.~{Song}, and Z.~{Han}, ``Reconfigurable intelligent
  surfaces assisted communications with limited phase shifts: How many phase
  shifts are enough?'' \emph{IEEE Trans. Veh. Technol.}, vol.~69, no.~4, pp.
  4498--4502, Apri. 2020.

\bibitem{IRS_Error_Capacity}
D.~{Li}, ``Ergodic capacity of intelligent reflecting surface-assisted
  communication systems with phase errors,'' \emph{IEEE Commun. Lett. (Early
  Access)}.

\bibitem{IRS_Marco}
X.~{Qian}, M.~{Di Renzo}, J.~{Liu}, A.~{Kammoun}, and M.~S. {Alouini},
  ``Beamforming through reconfigurable intelligent surfaces in single-user
  {MIMO} systems: {SNR} distribution and scaling laws in the presence of
  channel fading and phase noise,'' \emph{arXiv:2005.07472}.

\bibitem{zheng2003diversity}
L. Zheng and D. N. C. Tse, ``Diversity and multiplexing : A fundamental
  tradeoff in multiple antenna channels,'' {\em IEEE Trans. Inform. Theory,}
  vol. 49, pp. 1073--1096, May 2003.

\bibitem{gradsh2000table}
I. S. Gradshteyn and I. M. Ryzhik, {\em Table of Integrals}, Series and
  Products, 6th ed. New York, NY, USA: Academic, 2000.

\end{thebibliography}

\end{document}